\begin{document}
\mainmatter              
\title{A Game-Theoretic Model for Analysis and Design of Self-Organization Mechanisms in IoT}
\titlerunning{Game-Theoretic Model of Self-Organization in IoT}  
%
\author{Vahid Behzadan \and Banafsheh Rekabdar}
\authorrunning{Vahid Behzadan et al.}   
%
\tocauthor{Vahid Behzadan\inst{1}, Banafsheh Rekabdar}
\institute{Department of Computer Science and Engineering\\ University of Nevada, Reno\\
	1664 N Virginia St, Reno, NV 89557\\
\email{vbehzadan@unr.edu , brekabdar@unr.edu}}

\maketitle              

\begin{abstract}        
We propose a framework based on Network Formation Game for self-organization in the Internet of Things (IoT), in which heterogeneous and multi-interface nodes are modeled as self-interested agents who individually decide on establishment and severance of links to other agents. Through analysis of the static game, we formally confirm the emergence of realistic topologies from our model, and analytically establish the criteria that lead to stable multi-hop network structures.
\keywords {Internet of Things, Topology Control, Self-Organization, Game Theory}
\end{abstract}
\section{Introduction and Motivation}\label{intro}
Through the past decade, the number of internet-enabled devices has been growing at an unprecedented rate. The paradigm of Internet of Things (IoT) envisions an even more immersive and pervasive exploitation of internet connectivity by enabling more objects and devices to connect. Emerging applications of this move towards ubiquitous connectivity are wide and vast \cite{bandyopadhyay2011internet}, ranging from domestic monitoring and smart home solutions to healthcare solutions \cite{rohokale2011cooperative}, smart grids\cite{yun2010research}, and disaster monitoring \cite{fang2014integrated}. It hence follows that instances of IoT will be comprised of a great number of various devices, each with unique requirements and capabilities, leading to heterogeneity both in terms of function and communications.

The inevitably high degree of heterogeneity and scalability of IoT, dim the odds of feasibility and scalability for centralized control approaches \cite{ma2015networking}. An alternative to centralized architectures for IoT are those that rely on autonomic management of connectivity and resources through self-configuration \cite{athreya2013network}. Such solutions model the network as a system comprised of individual agents, each of which aims to retain connectivity with the network while optimizing their objectives, such as energy consumption and throughput. Even though this multi-agent abstraction presents a promising approach towards scalability, the decentralized nature of self-configuring IoT gives rise to many critical challenges in mechanism design. Of the most critical of these challenges is the problem of topology control, which is further complicated by the heterogeneity of IoT devices. Owing to the similarity of distributed IoT and Ad Hoc networks, the literature on self-organization and topological analysis of IoT are mainly focused on adopting techniques that are originally developed for generic distributed networks such as Wireless Sensor Networks (WSNs) \cite{athreya2013network}. Yet, unique features, such as the immense diversity in capabilities and requirements in all aspects of IoT present major distinguishing factors that necessitate the development of techniques specific to the challenges of this emerging technology. 

\begin{figure}[!t]
	\centering
	\includegraphics[width=\columnwidth]{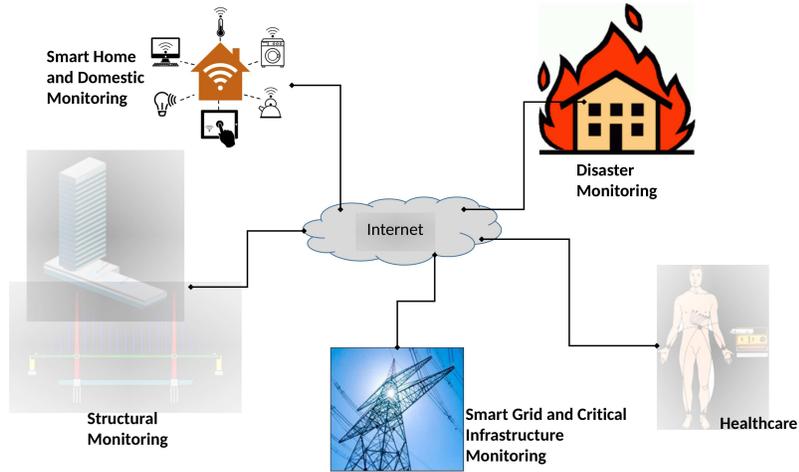}
	\caption{Applications of IoT}
	\label{fig_apps}
\end{figure}

The multi-agent model of IoT is comprised of opportunistic devices that aim to maximize their success in fulfilling their individual objectives, such as preservation of connectivity to the network, minimization of energy consumption and maintenance of a minimum Quality of Service (QoS). The inherent limitation of resources available to such opportunistic agents in any real-world deployment of IoT gives rise to a competitive environment, which motivates a game theoretic investigation of interactions in self-organizing IoT. The application of game theory to distributed topology control and self-organization has been an active area of research in recent years. Some of the notable literature in this area include the work of Eidenbenz, et. al. \cite{eidenbenz2006equilibria} on the analysis of equilibria in topology control games, Nahir, et. al.'s  detailed investigation of applying game theory to various problems of topology control \cite{nahir2009topology}, and Saad et. al. proposal's of a game theoretic algorithm for cooperative relaying in \cite{saad2011network}, based on their earlier analysis of the formation of hierarchical topologies in  multi-hop networks \cite{saad2009hierarchical}. The models presented in these and many other topology control games, one critical limitation is the assumption on homogeneity of the network. Recently, Meirom et. al. proposed a model of topology control games for heterogeneous AS-Level networks \cite{meirom2016strategic} \cite{meirom2015formation}, which  considers some degree of heterogeneity, but only accounts homogeneous link costs.

Based on the inevitable emphasis on the connectivity aspects of IoT networks,  this paper builds on the aforementioned models to provide a framework for analysis and design of distributed topology control mechanisms in IoT. The proposed framework is based on modeling of self-organization as a Network Formation Game \cite{jackson2005survey}, in which the actions of players are establishment or severance of links with other nodes in the IoT. Contrary to previous models, we consider heterogeneity in both communications and link cost. The proposed model also accounts for nodes equipped with multiple communication interfaces, thus supporting modern devices such as smart phones. We provide an analytical derivation of the criteria required for formation of a clique topology between nodes that are directly connected to the internet, and further develop this analysis to present the necessary criteria which lead to formation of hierarchical and star topologies between internet-connected nodes and the rest of the network. 

The remainder of this paper is organized as follows: Section \ref{model} details the model of IoT networks, followed by the formulation of network formation game in Section \ref{Game}. Emergence of stable IoT topologies and their criteria is discussed in Section \ref{analysis}, and the results of a numerical simulation is presented in \ref{system}. Finally, Section \ref{conclusion} concludes the paper with remarks on future areas of work.

\section{IoT Network Model}\label{model}
The generic definition of IoT has given rise to numerous models for the network structure and architecture \cite{ma2015networking}. In this work, IoT is considered to be a network formed with the objective of enabling direct or relayed connectivity of heterogeneous nodes to the internet (or other backbone networks). Heterogeneity of nodes entails diverse hardware and software parameters throughout the network, such as the number and type of communication interfaces (e.g. WLAN, LTE, Ethernet, etc.), energy constraints, and bandwidth requirements.

Accordingly, we model the IoT as a network $G(P)$ of $N$ nodes $P=\{P_i\ | \forall i \in \{1,2,...,N\}\}$, each with an arbitrary number of single channel radio interfaces. This definition may be seamlessly extended to cover multi-channel radios as well, via representing each as a group of single-channel radios. It is assumed that all interfaces of a node can be active simultaneously, but as detailed in Section \ref{Game}, the effects of activating each additional interface on undesired aspects such as co- and cross-interference, channel congestion, and energy consumption may be suitably captured in the system cost function. The presented model also allows that some, or all of the interfaces in nodes may remain idle throughout the analyzed operation. 

As the focus of this study is on topological properties, it is assumed that nodes are static relative to each other. Also, we consider the case that every node in the network is aware of its distance in terms of number of intermediate hops with every other node in the network. This can be justified by reliance on routing tables obtained from proactive network layer protocols such as OLSR \cite{clausen2003optimized}. The extent of a node's knowledge of the overall network topology is assumed to be limited to its directly connected neighbors.  

Nodes are classified in two categories: Those with direct connectivity to the internet, such as WiFi Access Points and 3G/LTE Enabled Devices, and those which need to be connected to the internet via the nodes in the former group, such as Bluetooth/Zigbee sensors. Let the set of Internet Connected (IC) nodes $G_I \in P$ denote the set of nodes with direct connection to the internet, and the set of non-ICs $G_S \in P \backslash G_I$ is the set of nodes that do not have a direct connection to the internet. The emerging network is thus hierarchical with at least two tiers: a higher tier formed by IC nodes, and a lower tier comprised of non-IC nodes who aim to connect to the higher tier. Hence, an important objective of IoT network controllers, whether centralized or distributed, is to enable the connection to the internet to the non-IC node, via linking them to one or more IC nodes. In line with practical network protocols, a further limit is imposed to the maximum number of hops that may exist between each pair of nodes, denoted by $h_{Max}$. The following section provides the details of one such controller based on a game theoretical framework known as Network Formation Games.

\section{Game Formulation} \label{Game}
Formation of macro-scale topologies in distributed networks is the collective result of the individual decisions made by each nodes on which set of nodes to connect with, and which links to severe. With the assumption that every such node aims at gaining more utility from its decisions and consequent actions, this interaction of multiple decision makers can be formulated as a Network Formation Game \cite{jackson2005survey}. Such games are comprised of competing agents who control the set of nodes they are connected to, with the common objective of forming coalitions of nodes that is most profitable for the deciding agent. It is evident that the game being considered in this work is of the non-cooperative type, since the decisions are made independently. Another assumption adopted in our proposal is that a link between two nodes is established if, and only if, both nodes consent to its establishment. This assumption emulates the real-world phenomenon that occurs in cost-optimizing distributed networks. A simple, yet realistic example is depicted in figure (\ref{fig_graph}. This figure illustrates a network formation game in which the objective of all players is to minimize their cost while maintaining their reachability from any other player by at most one intermediate hop - a property that we shall label as one-hop-reachable. The cost incurred to each player of this game is the cost of establishing their immediate links (denoted by edge weights in figure (\ref{fig_graph}), which is assumed to be the same for both of the linked nodes. If two nodes are not one-hop-reachable, their cost is set to be infinity. For instance, the cost incurred to node $B$ is the cost of establishing the link $BC$ plus the cost of establishing $BD$, i.e. $2+4=6$. As is shown in the figure, for node $C$ to be one-hop-reachable to node $A$, the minimum cost is obtained by relaying through node $D$. Yet for node $D$, establishment of a link to node $C$ does not bring any utility but losses, as node $D$ has already established a cheaper path to $C$ via node $B$, and is directly connected to node $A$. Hence, node $D$ will not consent to spending its limited bandwidth and energy to relay a transmission that gains him no benefits. Consequently, nodes $A$ and $C$ settle on establishing an expensive direct link to avoid the infinite cost of unreachability.

\begin{figure}[t]
	\centering
	\includegraphics[width=\columnwidth]{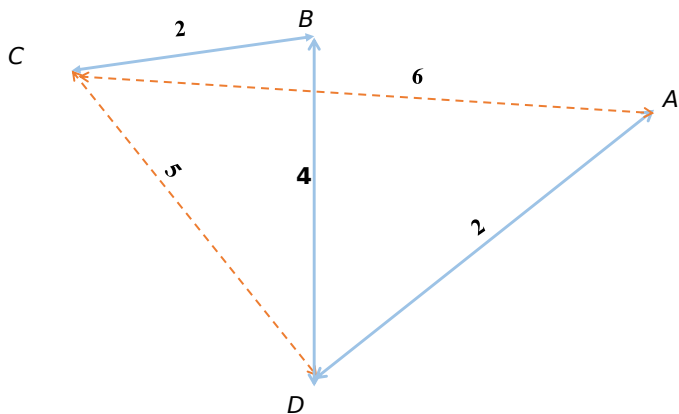}
	\caption{Example of the mutual consent in Myerson games}
	\label{fig_graph}
\end{figure}

Network formation games that are based on consensual establishment of links are known as bilateral linking or Myerson games \cite{dutta2013networks}, which is a widely adopted model in game theoretic distributed topology control, mainly due to its agreement with the opportunistic behavior of agents in decentralized networks. Our proposed framework builds atop of the previous work on bilateral link formation by extending the application of the Myerson model to considerations beyond that of minimizing energy consumption as the sole objective of the game, replacing the abstracted link establishment parameters with those of real wireless interface characteristics and propagation model, and filling the gap in self-organizing IoTs by providing a novel cross-layer framework for analytical design and evaluation of protocols and parameters involved in the distributed formation of IoT topologies. 

Even though the real phenomenon of network formation in ad hoc communications networks is of a dynamic nature, this work concentrates on the analysis of a static bilateral linking game, with the aim of gaining insights on the characteristics of emerging stable topologies, along with the criteria that leads to their emergence. Similar to every other game, our proposed Myerson game is formed of players, set of strategies, and a payoff/cost structure, the details of each are presented in this section.

\subsection{Players}
Let $P = \{p_1,p_2,...p_N\}$ denote a group of $N$ agents. Each agent $p_i$ is characterized by the following features:

\begin{itemize}
	\item Ordered set of its radio interfaces $R_i$, where $|R_i|$ is the number of interfaces and $R_i(r\in \{1,2,...,|R_i|\}) \in \{0,1\}$ is a binary value, indicating whether the interface is currently being used or not.
	\item Frequency of operation for each radio interface $f_{i,r}$
	\item Maximum bandwidth for each radio interface $b_{i,r}$
	\item Minimum required bandwidth $b_i ^ {Max}$ 
	\item Maximum transmit power for each radio interface $\tau _{i,r}$
	\item Receiver sensitivity for each radio interface $\S_{i,r}$
	\item Maximum antenna gain for each radio interface $x_{i,r}$
	\item 2-D Position $\gamma _i = (x_i, y_i)$
	\item Feature tuple for each interface $w_{i,r} = (f_{i,r}, b_{i,r}, b_i ^ {Max}, \tau _{i,r}, S_{i,r}, x_{i,r}, \gamma _i)$
\end{itemize}

Define the network topology $G = \{g_{ij}: i,k \in P , i \neq j \}$. If a bidirectional link is established between $p_i$ and $p_j$, then $g_{ij} = (r_i,r_j)$, where $r_i \in R_i$ is the interface chosen by the node $i$ to communicate with the corresponding interface in node $j$, i.e. $r_j \in R_j$. If there is no direct link between $i$ and $j$, $g_{ij} = (-1,-1)$.

\subsection{Strategies}
Let $C_i$ denote the cost function for every node $p_i$. Any node $p_i \in P$ may form a link $g_{ij}=(r_i,r_j)$ with any Node $p_j$ in the neighborhood $M(i)$, defined as the set of all nodes that fall within the maximum communications range of $i$, if:

\begin{enumerate}
	
	\item Nodes must have at least one type of radio interface available and in common, i.e. :
	
	\item $\Delta C (p_i, G + (r_i,r_j)) < 0$
	\item $\Delta C (p_j, G + (r_i,r_j)) < 0$

\end{enumerate}

Where $	\Delta C(p_K, G + (r_k, r_l)) = C (p_k, G \cup \{(r_k,r_l)\}) - C (p_k, G)$ is the difference between the total cost to node $p_k$ by establishing the link $(r_k, r_l)$ and the total cost to $p_k$ without the establishment of this link.

Agent $p_i$ may remove a link with agent $p_j$ in $M' \subset R_k ^c \{R_i (k) \neq (-1,-1)\}$ if:
\begin{eqnarray}
\Delta C (p_i, G - (r_i,r_j)) < 0 \nonumber
\end{eqnarray}

Where $\Delta C (p_k, G - (r_k,r_l)) = C (p_k, G \backslash \{(r_k,r_l)\}) - C (p_k, G)$ is the difference between the cost incurred by node $p_k$ removing the link $(r_k,r_l)$ and the cost incurred by maintaining this link.

\subsection{Payoff Structure}
For each node $p_i$, the payoff of forming a direct link is dependent on the set of objectives listed below:

\begin{enumerate}
	\item Minimize the total cost of link establishment $\sum_{z=1}^{{deg(p_i)}} L'_i(z)$
	\item Minimize the hop distance to all nodes in the network, with priority over minimizing distance to the nodes directly connected to the internet.
	\item Minimize energy consumption by avoiding excessive relay transmissions 
\end{enumerate}

The corresponding cost function for each node is thus formulated as:

\begin{eqnarray}
C(i,G) = C_i = \sum_{z=1}^{{deg(p_i)}} L'_i(z) + \Gamma \sum_{j\in G_I} h(i,j) \nonumber \\
+ \sum_{k\in G_S} h(i,k) + B_i 	
\end{eqnarray}

Where $L'_i(z)$ is the cost of establishing the z-th link of $p_i$, with $deg(p_i)$ denoting the number of links established by $p_i$. Let $L_i(z)$ be the link between nodes $i$ and $z$. To model the link cost, the following factors are considered:

\begin{itemize}
	\item $L_i (z)$ is directly proportional to the minimum transmission power required for $z$ to receive the signal. The transmission power depends on the fading model and noise on the channel, which generally is inversely proportional to the Euclidean distance between nodes, their antenna gains, and the receiver sensitivity. Every interface has a maximum budgeted transmit power , beyond which $L_i(z) = \infty$
	\item $L-i (z)$ is directly proportional to the number of connections established on interface $R_i (r)$. The more this number is, the more congestion is expected and hence the throughput suffers.
	\item $L_i (z)$ is inversely proportional to bandwidth. Higher the bandwidth, higher the throughput will be.
\end{itemize}

Hence, a generic formulation for $L_i (z)$ is constructed as:
\begin{eqnarray}	
\sum_{z=1}^{{deg(p_i)}} L'_i (z) = \sum_{r=1}^{|R_i|}  R_i (r) \sum_{z\in P s.t. g_{iz} = (R_i(r),o)} \alpha . \rho_i . \frac {\sigma_{ir}}{\beta_{ir}} 
\end{eqnarray}

Where $\alpha$ is a constant factoring the effect of each additional link on interface $R_i(r)$, $\rho_i$ is the relative importance of preserving energy to achieving the desired throughput, $\sigma_{ir}$ is the power transmitted by $p_i$ on this link, and $\beta_{ir}$ is the ratio of the available bandwidth to the required bandwidth, i.e.:
\begin{equation}
\beta_{ir} = \frac{b_{ir}}{b_i^{Min}}
\end{equation}

The factor $\Gamma \geq 1$ is the weighting factor for tuning the emphasize on minimizing the shortest hop-distance $h(i,j)$ to every IC node $j\in G_I$. $B_i$ is the bridging coefficient of node $p_i$, estimating the local burden of bridging communities and thus modeling the relative amount of relay transmissions that $p_i$ may have to handle for its neighbors. It is shown in \cite{hwang2008bridging} and \cite{komali2006distributed} that the higher values of bridging coefficient represent a higher risk of congestion, as well as collisions. Bridging coefficient is calculated as:
\begin{equation}
B_i = \frac{\frac{1}{deg(i}}{\sum_{j\in \{k : g_{ik} \neq (-1,-1)\}} \frac{1}{deg(j)}}
\end{equation}

\section{Equilibrium Topologies in Static game}\label{analysis}
This section investigates the criteria which enable the emergence of stable and efficient topologies from the proposed network formation mechanism. Having a game theoretic abstraction of the problem, we study the characteristics of stable networks by analyzing the equilibria of our model. One of the most intuitive types of equilibrium is the Nash equilibrium, defined as strategy profiles at which no player can increase its profit by unilaterally deviating from that profile, hinting at a stable outcome. Yet, Nash equilibrium is shown to be a weak notion for stability in network formation games \cite{bloch2006definitions}. Considering the bilateral nature of link formation in such games, stability of outcomes is characterized more accurately by considering bilateral deviations. To satisfy this requirement, we consider the notion of \emph{pairwise stability} \cite{bloch2006definitions}. A strategy profile is said to be pairwise stable if no unilateral or bilateral deviations could increase the utility of the players. Formally, a topology $G$ is pairwise stable if the following conditions are met:
\begin{enumerate}
	\item $\forall i, ij\in G, C(i,G) \leq C(i,G-ij)$
	\item $\forall i,j\notin G, if C(i,G+ij) < C(i,G) then C(j,G+ij) > C(j,G)$
\end{enumerate}

In the following subsections, we utilize pairwise stability in the formal analysis of stable topologies that can emerge from the proposed model. 

\subsection{Formation of Cliques}
A notable number of recent literature on bilateral link formation games are based on models that result in systematical limitation of pairwise stability to forest and tree topologies (e.g.\cite{arcaute2007dynamics}, \cite{arcaute2009network}) This property greatly neuters the applicability of such models to IoT. As discussed in Section \ref{model}, nodes in IoT are categorized as either Internet-Connected (IC) or non-IC. It is intuitive to assume that each IC node is directly connected to every other IC nodes through the internet connection, thereby the set of all IC nodes inherently forms a clique. Therefore, if the cost of link establishment is bounded by a critical value, it is expected that the clique remains stable. In the following theorem, we prove that under certain criteria, this topology is indeed pairwise stable.

\newtheorem{thm}{Theorem}
\begin{thm}
	Let $L_i(k)$ be the maximum cost for any internet-connected node $p_i\in G_I$ to establish a link with node $p_k\in G_I$. If $L_i(k) < \Gamma -1$, then the nodes in $G_I$ form a clique.
\end{thm}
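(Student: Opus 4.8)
The plan is to show that under the stated cost bound no pairwise-stable topology can leave any pair of internet-connected nodes without a direct link; since a topology in which every IC pair is directly linked is precisely a clique on $G_I$, this forces the nodes in $G_I$ to form a clique. I would argue by contradiction: suppose $G$ is a topology satisfying the pairwise-stability conditions defined above, yet two IC nodes $p_i, p_k \in G_I$ have $g_{ik} = (-1,-1)$. I will establish that both $p_i$ and $p_k$ strictly benefit from establishing the link $ik$, i.e. $\Delta C(p_i, G + ik) < 0$ and $\Delta C(p_k, G + ik) < 0$, which directly violates the second pairwise-stability condition (if adding $ik$ lowers $p_i$'s cost it must raise $p_k$'s, and vice versa).

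First I would decompose the cost change $\Delta C(p_i, G + ik)$ term by term using the cost function $C(i,G)$. The key structural fact I would invoke is that adding an edge to a graph never increases any shortest-path hop-distance; hence both the weighted IC-distance term $\Gamma \sum_{j \in G_I} h(i,j)$ and the non-IC term $\sum_{k' \in G_S} h(i,k')$ are non-increasing when $ik$ is added. Moreover, since $p_i$ and $p_k$ are distinct IC nodes with no direct link, $h(i,k) \ge 2$ beforehand and $h(i,k) = 1$ afterward, so the IC-distance term alone falls by at least $\Gamma\,(h(i,k) - 1) \ge \Gamma$. Against this gain, the only term that strictly increases is the link-establishment cost, which grows by at most $L_i(k)$, the maximum IC-to-IC link cost.

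The remaining ingredient is to control the bridging coefficient $B_i$, the single term whose behavior under edge addition is not monotone. I would bound its variation by a constant of at most $1$, using that the addition raises $deg(i)$ (shrinking the numerator $1/deg(i)$) and only perturbs the reciprocal-degree sum in the denominator, so that its worst-case adverse contribution is absorbed into the unit slack of the threshold. Combining these estimates gives $\Delta C(p_i, G + ik) \le L_i(k) - \Gamma + 1$, which is strictly negative exactly when $L_i(k) < \Gamma - 1$; by the symmetry of link costs between the two endpoints the identical bound yields $\Delta C(p_k, G + ik) < 0$. Both endpoints therefore strictly gain, contradicting stability, and this contradiction shows that every IC pair must already be directly linked, so $G_I$ is a clique.

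I expect the main obstacle to be the rigorous control of the bridging-coefficient term: unlike the hop-distance contributions, $B_i$ is a nonlinear ratio of reciprocal degrees that need not move monotonically when a link is added, so the crux of the argument is to justify that its change cannot exceed the unit margin encoded in the hypothesis $L_i(k) < \Gamma - 1$. A secondary care point is the boundary case $h(i,k) = \infty$, where $p_i$ and $p_k$ lie in different components before the addition; there the distance reduction is even larger and the inequality only becomes easier to satisfy.
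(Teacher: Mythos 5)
Your proposal is correct, and its computational core is exactly the paper's: both decompose $\Delta C(p_i,G+g_{ik})$ into a link cost of at most $L_i(k)$, a weighted hop-distance saving of at least $\Gamma$, a non-positive change in the non-IC distance term, and a bridging-coefficient perturbation bounded by one, arriving at $\Delta C \leq L_i(k)-\Gamma+1<0$ precisely when $L_i(k)<\Gamma-1$. Where you differ is the wrapper around this computation. The paper evaluates the incentive for ``a node $p_i$ that is yet to establish connections'' and then asserts that the resulting clique is also stable against severance; you instead argue by contradiction that no pairwise-stable topology can omit an IC--IC link. Your framing is the stronger and more faithful reading of the theorem, since it applies to arbitrary topologies rather than formation from scratch, and your explicit use of the monotonicity of shortest paths under edge addition, plus the separate treatment of $h(i,k)=\infty$, fills gaps the paper leaves implicit. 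One caution on the step you yourself flag as the crux: the unit bound on $\Delta B_i$ is not a uniform fact. When $deg(i)\geq 1$ before the addition, $B_i$ strictly decreases (the numerator $1/deg(i)$ shrinks while the reciprocal-degree sum in the denominator grows by the new neighbor's term), so $\Delta B_i\leq 0$ and your bound holds with room to spare; but when $deg(i)=0$, the new bridging coefficient equals $deg(k)+1$, which can be of order $N$ rather than $1$. That degenerate case is rescued only by your secondary observation: an isolated node carries infinite (or maximal, under the $h_{Max}$ cap) distance cost, so the distance saving dominates any finite $\Delta B_i$. Once you split those two cases explicitly, your argument is complete --- and in fact it repairs the paper's own claim that $0<\Delta B_i<1$, which has the wrong sign in the generic connected case.
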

\begin{proof}
	Assume a node $p_i$ that is yet to establish connections to any node in $G$. For any node $p_k\in G_I$, the cost difference of establishing a link is given by:
	
	\begin{eqnarray}
	C(p_i, G+g_{ik}) = C(p_i, G\cup \{r_i,r_k)\})-C(p_k,G) \nonumber \\
	= L_i(k)+\Gamma (-1)+0+\Delta B_i
	\end{eqnarray}

	Where
	\begin{eqnarray}
	\Delta B_i = \frac{\frac{1}{deg(i)+1}}{\sum_{j\in \{\forall \zeta |g_{i\zeta} \neq (-1,1)\}}\frac{1}{deg(j)}+1}\nonumber \\
	- \frac{\frac{1}{deg(i)}}{\sum_{j\in \{\forall \zeta |g_{i\zeta} \neq (-1,1)\}}\frac{1}{deg(j)}}
	\end{eqnarray}

	Considering the minimum and maximum values of $deg(i)$ and $\sum_{j\in \{\forall \zeta |g_{i\zeta} \neq (-1,1)\}}\frac{1}{deg(j)}\}$, it is trivial to show that:
	
	\begin{equation}
	0<\Delta B_i<1 \nonumber
	\end{equation}
	
	Hence, the maximum valid value of the cost difference is given by:
	\begin{equation}
	\Delta C(p_i, G+g_{ik}) = L_i(k) - \Gamma + 1 
	\end{equation}

	For this cost difference to be feasible for all nodes in $G_I$, the following condition must be satisfied:
	
	\begin{eqnarray}
	\Delta C(p_i, G+g_{ik}) < 0 \nonumber\\
	\Rightarrow L_i(k) - \Gamma + 1 < 0 \nonumber \\
	\Rightarrow L_i(k) < \Gamma - 1
	\end{eqnarray}
	
	If this condition holds true, establishment of a link between any pair of nodes in $G_I$ decreases the cost for both nodes, hence leading to a clique topology. Inversely, severing any link in the resulting clique by any node $i\in G_I$ would impose a higher cost to $i$ than gain. Therefore, this criteria leads to cliques that are pairwise-stable.
\end{proof}

\subsection{Formation of Stars and Hierarchies}
Having established the criteria for the proposed model to result in a realistic stable topology for IC nodes, we study the topologies that emerge under this criteria for non-IC nodes. First, we derive the conditions that result in every non-IC node being linked to at most one of the IC nodes. Then, we derive the necessary conditions for formation of star clusters between non-IC nodes and IC nodes.

\newtheorem{thm2}{Theorem}
\begin{thm} If $L_i(k) < \Gamma - 1$, the maximum number of links between any non-IC node $j\in G_S$ and the set of Internet-connected nodes $G_I$ is 1.
\end{thm}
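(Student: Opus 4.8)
The plan is to lean on the preceding clique theorem and then reduce the statement to a mutual-consent argument carried out on the \emph{Internet-connected} endpoint. First I would invoke that theorem: since $L_i(k) < \Gamma - 1$, the set $G_I$ forms a clique, so any two IC nodes sit at hop-distance $1$. Now fix a non-IC node $v \in G_S$ holding a link to some IC node $a \in G_I$, and suppose it contemplates a second link to a distinct IC node $b \in G_I$. The clique guarantees that $v$ already reaches $b$ along the path $v$--$a$--$b$, so $h(v,b) = 2$ without the prospective link and $h(v,b) = 1$ with it, while every remaining IC node stays at distance $2$ and no other shortest distance is altered.

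Next I would write the two cost differences that \emph{pairwise stability} requires for the link $v$--$b$ to form. On $v$'s side the drop of $h(v,b)$ from $2$ to $1$ lowers the IC-distance term by $\Gamma$, so, exactly as in the preceding proof,
\begin{equation}
\Delta C(v, G + g_{vb}) = L_v(b) - \Gamma + \Delta B_v, \nonumber
\end{equation}
which is negative under the hypothesis; hence $v$ always consents. The decisive computation is on $b$'s side. Because $v \in G_S$, the distance $h(b,v)$ enters $b$'s cost only through the \emph{unweighted} term $\sum_{k \in G_S} h(b,k)$, and since $v$ is already reachable at distance $2$ through the clique, the direct link trims this term by just $1$, giving
\begin{equation}
\Delta C(b, G + g_{vb}) = L_b(v) - 1 + \Delta B_b, \nonumber
\end{equation}
with $\Delta B_b > 0$ by the same bounding argument used before.

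The step I expect to be the main obstacle is to certify that this last quantity is non-negative, so that $b$ withholds consent. The asymmetry of the weights is what should drive the result: reaching a further IC node is worth $\Gamma \geq 1$, whereas reaching an already-connected non-IC node is worth only $1$, so the marginal gain available to the IC endpoint from a redundant link is the smallest possible. I would close by a contradiction argument: were $G$ pairwise stable with $v$ attached to both $a$ and $b$, condition~(1) would force the link $v$--$b$ to be worth keeping for $b$, i.e. $L_b(v) + \Delta B_b \leq 1$; showing instead that $L_b(v) + \Delta B_b > 1$ contradicts stability and leaves $v$ with at most one IC link. The delicate point is that the hypothesis $L_i(k) < \Gamma - 1$ is stated for IC--IC links and is consumed only in securing the clique, so making the final inequality rigorous demands controlling the IC--non-IC link cost against the \emph{unit} weight of non-IC distances (and checking that routing $b$ toward farther non-IC nodes through $v$ yields no additional saving); this is where the genuine work of the proof resides.
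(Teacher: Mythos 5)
Your bookkeeping for the IC endpoint is actually the correct one for that side of the bargain: since $b$ already sits at hop-distance $1$ from every other IC node, only the unit-weighted non-IC term and the bridging term can move, giving $\Delta C(b, G+g_{vb}) = L_b(v) - 1 + \Delta B_b$ (at best; if other non-IC nodes sit behind $v$, the saving for $b$ is even larger, which works against you). But the proof is not finished, and the step you defer --- certifying $L_b(v) - 1 + \Delta B_b \geq 0$ so that $b$ withholds consent --- is not merely "the genuine work"; it is a gap that cannot be closed from the stated hypothesis at all. The hypothesis $L_i(k) < \Gamma - 1$ constrains only IC--IC link costs and says nothing about $L_b(v)$: if IC-to-non-IC links are cheap (say $L_b(v)$ near $0$, with $\Delta B_b$ also near $0$), both endpoints strictly gain and the second link forms, so the blocking inequality you need is false in general. (Your earlier assertion that $v$ "always consents \ldots under the hypothesis" has the same flaw, since $L_v(b)$ is not an IC--IC cost either.)

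The paper escapes this differently: rather than trying to make the IC endpoint refuse, its proof evaluates the cost change of the prospective second link on the side that carries the $\Gamma$-weighted gain of getting one hop closer to an IC node, namely $\Delta C = L_{i'}(j) - \Gamma + 0 + 1$, and declares the link infeasible exactly when $L_{i'}(j) > \Gamma - 1$. That is, the paper's proof does not derive the theorem from the stated hypothesis; it derives the \emph{additional} condition on IC-to-non-IC link costs under which the conclusion holds, states it in the proof's closing sentence, and then carries it explicitly into the hypothesis of Theorem 3 ($L_{i'}(j) > \Gamma - 1$). So your diagnosis that the stated hypothesis is consumed entirely by the clique and leaves IC--non-IC costs uncontrolled is accurate --- the theorem as stated is under-hypothesized --- but the intended argument routes the refusal through the $\Gamma$-weighted term (link cost must exceed $\Gamma - 1$), not through the IC endpoint's unit-weighted term as you attempted; under the paper's added condition the blocking inequality is immediate, whereas yours ($L_b(v) + \Delta B_b > 1$) is neither assumed nor derivable.
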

\begin{proof}
	Assuming there already exists a link between $i\in G_I$ and $j\in G_S$, the maximum cost difference of establishing a second link from another node $i'\in G_I\backslash \{i\}$ to $j$ is:
	\begin{eqnarray}
	\Delta C(i', G+g_{i'j}) = L_{i'}(j) - \Gamma + 0 + 1
	\end{eqnarray}
	
	For this link to be feasible for $i'$, the following condition must be met:
	\begin{eqnarray}
	\Delta C(i', G+g_{i'j}) < 0 \nonumber \\
	\Rightarrow L_{i'}(j) < \Gamma - 1
	\end{eqnarray}
	
	Therefore, if the minimum cost of connection to a node $j\in G_S$ satisfies $L_{i'}(j) > \Gamma - 1$, every non-IC node is connected to at most one IC node.
\end{proof}

In the following theorem, we derive the conditions under which every non-IC node is directly connected to an IC node, thus forming star-shaped clusters whose centers are IC nodes.

\newtheorem{thm3}{Theorem}
\begin{thm}
	Let $L_i(k) < \Gamma - 1$ and $L_{i'}(j) > \Gamma - 1$, the maximum degree of any non-IC node $j\in G_S$ is 1 iff $\forall j'\in G_S\setminus{j}, L_j(j') > \frac{1}{2}$.
\end{thm}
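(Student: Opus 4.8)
The plan is to reduce the statement to a question about a single kind of edge. Under the standing hypotheses, the clique result of the previous subsection makes the internet-connected nodes a clique, and the single-IC-link result (whose premise $L_{i'}(j)>\Gamma-1$ is assumed here) forces every $j\in G_S$ to be adjacent to at most one node of $G_I$. Since $j$ must retain that one IC link to avoid the infinite cost of unreachability, its degree equals $1$ plus the number of edges it holds to other non-IC nodes. Hence ``the maximum degree of $j\in G_S$ equals $1$'' is logically equivalent to ``no edge $g_{jj'}$ with $j'\in G_S\setminus\{j\}$ is pairwise stable,'' and I would establish both directions of the iff by pinning down exactly when such an edge is mutually profitable.

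By pairwise stability the edge $g_{jj'}$ forms iff it strictly lowers both endpoints' costs, i.e. $\Delta C(j,G+g_{jj'})<0$ and $\Delta C(j',G+g_{jj'})<0$. I would expand $\Delta C(j,G+g_{jj'})$ into the four terms of the cost function: the link cost $L_j(j')$, the $\Gamma$-weighted change in IC hop-distances, the change in non-IC hop-distances, and the bridging change $\Delta B_j$. The decisive observation is that $j$ already reaches every IC node within two hops---one to its own IC attachment, one across the IC clique---so a fresh link to another non-IC node can shorten no IC path and the $\Gamma$-weighted term is identically $0$. This is precisely why the resulting threshold is free of $\Gamma$, unlike the two earlier theorems.

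It then remains to combine the non-IC hop term with $\Delta B_j$. The new edge collapses $h(j,j')$ to $1$ from its pre-link value, contributing a hop saving, while simultaneously raising $\deg(j)$ and $\deg(j')$ by one and so perturbing $B_j=\bigl(1/\deg(j)\bigr)\big/\sum_{k\sim j}\!\bigl(1/\deg(k)\bigr)$. The crux is to show that, at the margin, these two effects combine to $-\tfrac12$, giving $\Delta C(j,G+g_{jj'})=L_j(j')-\tfrac12$; invoking the modelling assumption that a link costs both endpoints equally, $L_j(j')=L_{j'}(j)$, the companion inequality for $j'$ carries the same threshold, so mutual profitability is equivalent to $L_j(j')<\tfrac12$. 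Both directions then close: if $L_j(j')>\tfrac12$ for all $j'\in G_S\setminus\{j\}$ no such edge is even unilaterally attractive, so $j$ keeps degree $1$; and if some $L_j(j')<\tfrac12$ the edge forms and drives $\deg(j)$ to $2$, so degree-$1$ maximality forces $L_j(j')>\tfrac12$ throughout.

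I expect the main obstacle to be making the constant $\tfrac12$ robust across configurations rather than merely bounding $\Delta B_j$ as in the earlier proofs. When $j$ and $j'$ hang off distinct IC parents the bare hop saving is larger ($h(j,j')=3$ collapsing to $1$) than when they share a parent ($2$ collapsing to $1$), which on its own would admit profitable links even with $L_j(j')>\tfrac12$ and break the claim. The argument must therefore exploit that the bridging coefficient is engineered to penalise relaying: the extra hop savings in the distinct-parent case must be exactly offset by the heavier bridging burden $j$ incurs as a new inter-cluster relay, so that the non-link contributions return to $-\tfrac12$ uniformly. Verifying this cancellation---tracking $\deg(j)$, $\deg(j')$ and the neighbour-degree sums precisely in each case rather than using the loose $0<\Delta B<1$ estimate---is the delicate step on which the tightness of the iff rests.
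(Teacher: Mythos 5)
Your proposal retraces the paper's own proof almost exactly: the reduction via Theorem~2, the four-term expansion of $\Delta C(j,G+g_{jj'})$, the observation that the $\Gamma$-weighted term vanishes because $j$ already reaches every IC node in two hops through its parent and the IC clique, and the resulting form $\Delta C(j,G+g_{jj'})=L_j(j')-\tfrac12$ are precisely the paper's displayed identity $L_j(j')+0-1+\tfrac12$ together with its positivity requirement. You are in fact more complete than the paper on the ``iff'': the paper only argues infeasibility of the deviation, whereas you also close the converse direction using the symmetry $L_j(j')=L_{j'}(j)$.

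The problem is the step you yourself flag as delicate, and it cannot be repaired in the way you propose. You need the extra hop saving in the distinct-parent case ($h(j,j')$ dropping from $3$ to $1$ rather than from $2$ to $1$) to be exactly offset by a heavier bridging penalty. But the bridging coefficient $B_j=\frac{1/\deg(j)}{\sum_{k:\,g_{jk}\neq(-1,-1)}1/\deg(k)}$ depends only on $\deg(j)$ and the degrees of $j$'s neighbors; after the deviation these are $\deg(j)=2$, $\deg(j')=2$, and $\deg(i)$ unchanged, irrespective of whether $j'$ hangs off the same IC parent as $j$ or a different one. Hence $\Delta B_j$ takes the same value in both cases and cannot absorb the extra unit of hop saving: in the distinct-parent case the deviation costs $L_j(j')-2+\Delta B_j$, so a pair with $\tfrac12<L_j(j')<\tfrac32$ still links profitably (for both endpoints, by the same symmetry you invoke), and the $\Leftarrow$ direction of the claim fails. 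Note also that $\Delta B_j$ is not a universal constant $+\tfrac12$ in the first place---it varies with $\deg(i)$---so the exact cancellation you hope to verify is not even well posed. To be fair, the paper's proof has the same hole: it silently takes the hop saving to be $1$, i.e., it only checks deviations between children of the same IC parent. A defensible statement would either restrict attention to same-parent pairs or strengthen the hypothesis for nodes attached to distinct IC parents to $L_j(j')>\tfrac32$; your proposal, as written, cannot close this case.
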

\begin{proof} Theorem 2 proves that under the aforementioned conditions, the maximum number of links between any non-IC node and all IC nodes is 1. Assume that $j$ establishes is a second link to a node $j'\in G_S$. The cost difference is given by:
	\begin{eqnarray}
	\Delta C(j,G+g_{jj'}) = L_j(j') + 0 - 1 + \frac{1}{2} \nonumber\\
	\end{eqnarray}
	
	For this action to be infeasible, the cost difference must be positive. Therefore:
	
	\begin{eqnarray}
	L_j(j') + 0 - 1 + \frac{1}{2} > 0 \nonumber\\
	\Rightarrow L_j(j') > \frac{1}{2}
	\end{eqnarray}
\end{proof}

As a corollary of Theorem 3, it is worth noting that if $G$ is connected and the conditions of theorems 1 and 2 are satisfied, but condition of theorem 3 is not, then the resulting topology contains nodes that have one link to the IC set, but are connected to one or more non-IC nodes. Such nodes act as gateways and relays for other non-IC nodes connected to them, and the emerging topologies have more than the original 2 levels of hierarchy, namely IC and non-IC. Consequently, this model allows for resource planning by determination of nodes that are bound to become relays, and therefore require higher communications and processing capabilities.

\section{Numerical Results}\label{system}
To demonstrate the performance of our proposed model in IoT, a smart home network of 10 nodes is simulated. The network is comprised of 4 IC nodes connected to the internet through WLAN 802.11n, and equipped with Bluetooth and Z-Wave \cite{gomez2010wireless} interfaces, 3 non-IC nodes with Bluetooth and Z-Wave interfaces, and 3 non-IC nodes with only Z-Wave interfaces. Nodes are positioned on a $30m\times 30m$ grid, as depicted in figure \ref{fig_Sim1}. Parameters of each radio interface are presented in Table \ref{table:sim_par}. 
\begin{table}
	\caption{Simulation Parameters}
	\label{table:sim_par}
	\centering
	\begin{tabular}{lll}\hline
		Parameter                   & Symbol      & Value      \\ \hline
		Freq. of WLAN            & $f_{i,0}$            & $2.4GHz$           \\
		Freq. of Bluetooth       & $f_{i,1}$            & $2.4GHz$           \\
		Freq. of Z-Wave          & $f_{i,2}$            & $0.908GHz$           \\
		Max. Bitrate of WLAN     & $b_{i,0}$            & $300Mbps$           \\
		Max. Bitrate of Bluetooth   & $b_{i,1}$            & $2 Mbps$           \\
		Max. Bitrate of Z-Wave      & $b_{i,2}$            & $40Kbps$           \\
		Min Req. Bitrate of IC Nodes & $b_i^{Min}$            & $10Mbps$           \\
		Min Req. Bitrate of Bluetooth Nodes & $b_i^{Min}$            & $0.5Mbps$           \\
		Min Req. Bitrate of Z-Wave Nodes & $b_i^{Min}$            & $5Kbps$           \\	
		Max. Tx Power of WLAN    & $\tau_{i,0}$            & $1W$           \\	
		Max. Tx Power of Bluetooth    & $\tau_{i,1}$            & $25mW$           \\		
		Max. Tx Power of Z-Wave    & $\tau_{i,2}$            & $1mW$           \\		
		Rx Sensitivity of WLAN    & $\S_{i,0}$            & $10^{-11}W$           \\		
		Rx Sensitivity of Bluetooth    & $\S_{i,1}$            & $10^{-10}W$           \\			
		Rx Sensitivity of Z-Wave    & $\S_{i,0}$            & $6.3\times 10^{-13}W$           \\			
		Max. Hop Distance    & $h_{Max}$            & $5$           \\
		\hline \\
	\end{tabular}
\end{table}

\begin{figure}[!h]
	\centering
	\begin{subfigure}[b]{0.55\textwidth}
		\includegraphics[scale = 0.40]{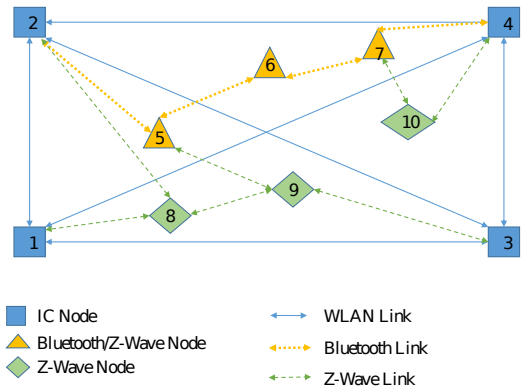}
		\caption{}
		\label{fig_Sim1}
	\end{subfigure}
	\begin{subfigure}[b]{0.55\textwidth}
		\includegraphics[scale = 0.40]{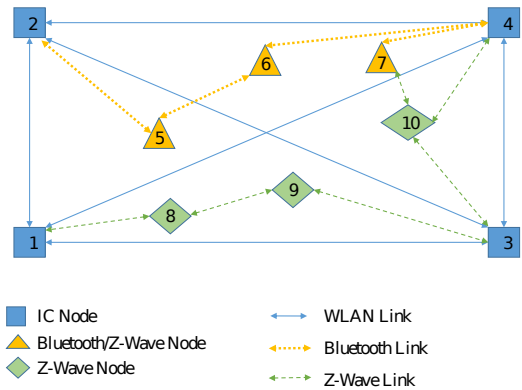}
		\caption{}
		\label{fig_Sim2}
	\end{subfigure}
	\caption{(a)Formation of Clique between IC Nodes, (b)Formation of Hierarchy between Non-IC Nodes}
\end{figure}
Considering the highest link cost, the condition for formation of Cliques presented in Theorem 1 is satisfied by setting  $\Gamma = 570$. As depicted in Figure X, the resulting topology is a clique for the IC nodes, while non-IC nodes represent a hierarchical  structure. Also, it is observed that the non-IC nodes $7$ and $9$ are connected to more than one IC node. By increasing the value of $\Gamma$ to $600$, the model results in a topology that satisfies the conditions of both theorems 1 and 2, where nodes $6$ and $7$ act as relays for nodes $8$ and $9$, respectively.

\section{Conclusions}\label{conclusion}

In this paper, we proposed a model for self-organization in IoT based on bilateral link formation strategies. The model captures the heterogeneity of devices in IoT, as well as the emphasis on connectivity to the internet in the proposed cost function. The subsequent analysis of the static game established the criteria for emergence of cliques between the set of internet-connected nodes, as well as multi-hop and star structures. 
Following the proposed model, further analysis of the static game may provide insights into the efficiency of emerging topologies, and establish the criteria for derivation of optimal network structures. Furthermore, this model provides a foundation for design and evaluation of dynamic games and algorithms for distributed self-organization in heterogeneous networks such as IoT.  

\begin{thebibliography}{5}
	
	\bibitem{bandyopadhyay2011internet}
	D.~Bandyopadhyay and J.~Sen, ``Internet of things: Applications and challenges
	in technology and standardization,'' {\em Wireless Personal Communications},
	vol.~58, no.~1, pp.~49--69, 2011.
	
	\bibitem{rohokale2011cooperative}
	V.~M. Rohokale, N.~R. Prasad, and R.~Prasad, ``A cooperative internet of things
	(iot) for rural healthcare monitoring and control,'' in {\em Wireless
		Communication, Vehicular Technology, Information Theory and Aerospace \&
		Electronic Systems Technology (Wireless VITAE), 2011 2nd International
		Conference on}, pp.~1--6, IEEE, 2011.
	
	\bibitem{yun2010research}
	M.~Yun and B.~Yuxin, ``Research on the architecture and key technology of
	internet of things (iot) applied on smart grid,'' in {\em Advances in Energy
		Engineering (ICAEE), 2010 International Conference on}, pp.~69--72, IEEE,
	2010.
	
	\bibitem{fang2014integrated}
	S.~Fang, L.~Da~Xu, Y.~Zhu, J.~Ahati, H.~Pei, J.~Yan, and Z.~Liu, ``An
	integrated system for regional environmental monitoring and management based
	on internet of things,'' {\em IEEE Transactions on Industrial Informatics},
	vol.~10, no.~2, pp.~1596--1605, 2014.
	
	\bibitem{ma2015networking}
	H.~Ma, L.~Liu, A.~Zhou, and D.~Zhao, ``On networking of internet of things:
	explorations and challenges,'' 2015.
	
	\bibitem{athreya2013network}
	A.~P. Athreya and P.~Tague, ``Network self-organization in the internet of
	things,'' in {\em 2013 IEEE International Conference on Sensing,
		Communications and Networking (SECON)}, pp.~25--33, IEEE, 2013.
	
	\bibitem{eidenbenz2006equilibria}
	S.~Eidenbenz, V.~Kumar, and S.~Zust, ``Equilibria in topology control games for
	ad hoc networks,'' {\em Mobile Networks and Applications}, vol.~11, no.~2,
	pp.~143--159, 2006.
	
	\bibitem{nahir2009topology}
	A.~Nahir, A.~Orda, and A.~Freund, ``Topology design and control: A
	game-theoretic perspective,'' in {\em INFOCOM 2009, IEEE}, pp.~1620--1628,
	IEEE, 2009.
	
	\bibitem{saad2011network}
	W.~Saad, Z.~Han, T.~Basar, M.~Debbah, and A.~Hjorungnes, ``Network formation
	games among relay stations in next generation wireless networks,'' {\em IEEE
		Transactions on Communications}, vol.~59, no.~9, pp.~2528--2542, 2011.
	
	\bibitem{saad2009hierarchical}
	W.~Saad, Q.~Zhu, T.~Basar, Z.~Han, and A.~Hjorungnes, ``Hierarchical network
	formation games in the uplink of multi-hop wireless networks,'' in {\em
		Global Telecommunications Conference, 2009. GLOBECOM 2009. IEEE}, pp.~1--6,
	IEEE, 2009.
	
	\bibitem{meirom2016strategic}
	E.~A. Meirom, S.~Mannor, and A.~Orda, ``Strategic formation of heterogeneous
	networks,'' {\em arXiv preprint arXiv:1604.08179}, 2016.
	
	\bibitem{meirom2015formation}
	E.~A. Meirom, S.~Mannor, and A.~Orda, ``Formation games of reliable networks,''
	in {\em 2015 IEEE Conference on Computer Communications (INFOCOM)},
	pp.~1760--1768, IEEE, 2015.
	
	\bibitem{jackson2005survey}
	M.~O. Jackson, ``A survey of network formation models: stability and
	efficiency,'' {\em Group Formation in Economics: Networks, Clubs, and
		Coalitions}, pp.~11--49, 2005.
	
	\bibitem{clausen2003optimized}
	T.~Clausen and P.~Jacquet, ``Optimized link state routing protocol (olsr),''
	tech. rep., 2003.
	
	\bibitem{dutta2013networks}
	B.~Dutta and M.~O. Jackson, {\em Networks and groups: Models of strategic
		formation}.
	\newblock Springer Science \& Business Media, 2013.
	
	\bibitem{hwang2008bridging}
	W.~Hwang, T.~Kim, M.~Ramanathan, and A.~Zhang, ``Bridging centrality: graph
	mining from element level to group level,'' in {\em Proceedings of the 14th
		ACM SIGKDD international conference on Knowledge discovery and data mining},
	pp.~336--344, ACM, 2008.
	
	\bibitem{komali2006distributed}
	R.~S. Komali and A.~B. MacKenzie, ``Distributed topology control in ad-hoc
	networks: a game theoretic perspective,'' in {\em Proceedings of IEEE CCNC},
	pp.~563--568, 2006.
	
	\bibitem{bloch2006definitions}
	F.~Bloch and M.~O. Jackson, ``Definitions of equilibrium in network formation
	games,'' {\em International Journal of Game Theory}, vol.~34, no.~3,
	pp.~305--318, 2006.
	
	\bibitem{arcaute2007dynamics}
	E.~Arcaute, E.~Dallal, R.~Johari, and S.~Mannor, ``Dynamics and stability in
	network formation games with bilateral contracts,'' in {\em Decision and
		Control, 2007 46th IEEE Conference on}, pp.~3435--3442, IEEE, 2007.
	
	\bibitem{arcaute2009network}
	E.~Arcaute, R.~Johari, and S.~Mannor, ``Network formation: Bilateral
	contracting and myopic dynamics,'' {\em IEEE Transactions on Automatic
		Control}, vol.~54, no.~8, pp.~1765--1778, 2009.
	
	\bibitem{gomez2010wireless}
	C.~Gomez and J.~Paradells, ``Wireless home automation networks: A survey of
	architectures and technologies,'' {\em IEEE Communications Magazine},
	vol.~48, no.~6, pp.~92--101, 2010.
	
\end{thebibliography}
\end{document}